\DeclareMathOperator*{\argmax}{arg\,max}
\title{K-clique-graphs for Dense Subgraph Discovery}
\author{Giannis Nikolentzos\inst{1,2} \and Polykarpos Meladianos\inst{1,2} \and Yannis Stavrakas\inst{3} \and Michalis Vazirgiannis\inst{1,2}}
\institute{Lix, École Polytechnique, France\\
\and
Athens University of Economics and Business, Greece\\
\and
Institute for the Management of Information Systems RC ``Athena'', Greece\\
\email{\{nikolentzos,pmeladianos,mvazirg\}@aueb.gr, yannis@imis.athena-innovation.gr}}
\begin{document}

\maketitle            

\begin{abstract}
Finding dense subgraphs in a graph is a fundamental graph mining task, with applications in several fields.
Algorithms for identifying dense subgraphs are used in biology, in finance, in spam detection, etc. Standard formulations of this problem such as the problem of finding the maximum clique of a graph are hard to solve.
However, some tractable formulations of the problem have also been proposed, focusing mainly on optimizing some density function, such as the degree density and the triangle density.
However, maximization of degree density usually leads to large subgraphs with small density, while maximization of triangle density does not necessarily lead to subgraphs that are close to being cliques.

In this paper, we introduce the $k$-clique-graph densest subgraph problem, $k \geq 3$, a novel formulation for the discovery of dense subgraphs.
Given an input graph, its $k$-clique-graph is a new graph created from the input graph where each vertex of the new graph corresponds to a $k$-clique of the input graph and two vertices are connected with an edge if they share a common $k - 1$-clique.
We define a simple density function, the $k$-clique-graph density, which gives compact and at the same time dense subgraphs, and we project its resulting subgraphs back to the input graph.
In this paper, we focus on the triangle-graph densest subgraph problem obtained for $k = 3$.
To optimize the proposed function, we present an efficient greedy approximation algorithm that scales well to larger graphs.

We evaluate the proposed algorithm on real datasets and compare it with other algorithms in terms of the size and the density of the extracted subgraphs.
The results verify the ability of the proposed algorithm in finding high-quality subgraphs in terms of size and density.
Finally, we apply the proposed method to the important problem of keyword extraction from textual documents.
\end{abstract}

\section{Introduction}\label{sec:introduction}
\noindent
In recent years, graph-based representations have become extremely popular for modelling real-world data.
Some examples of data represented as graphs include social networks, protein or gene regulation networks and textual documents.
The problem of extracting dense subgraphs from such graphs has received a lot of attention due to its potential applications in many fields.
Specifically, in the web graph, dense subgraphs may correspond to link spam \cite{gibson2005discovering} and hence, they can be used for spam detection.
In bioinformatics, they are used for finding molecular complexes in protein-protein interaction networks \cite{bader2003automated} and for discovering motifs in genomic DNA \cite{fratkin2006motifcut}.
In the field of finance, they are used for discovering migration motifs in financial markets \cite{du2009migration}.
Other applications include graph compression \cite{buehrer2008scalable}, graph visualization \cite{alvarez2005large}, real-time identification of important stories in Twitter \cite{angel2014dense} and community detection \cite{chen2012dense}.

Given an undirected, unweighted graph $G = (V,E)$, we will denote $|V| = n$ the number of vertices and $|E| = m$ the number of edges.
Given a subset of vertices $S \subseteq V$, let $E(S)$ be the set of edges that have both end-points in $S$.
Hence, $G(S) = (S, E(S))$ is the subgraph induced by $S$.
The \textit{density} of the set $S$ is $\delta(S) = |E(S)|/\binom{|S|}{2}$, the number of edges in $S$ over the total possible edges.
Finding the set $S$ that maximizes $\delta$ is not a meaningful problem, as density $\delta$ does not take into account the size of the subgraph.
For example, a subgraph consisting of two vertices connected with an edge has higher density $\delta$ than a subgraph consisting of $100$ vertices and all but one edge between them.
However, clearly, we would prefer the latter subgraph from the former even if it achieves a lower value of density $\delta$.
Typically, the problem of dense subgraph discovery asks for a set of vertices $S$ which is large and which has high density.
Several different functions have been proposed in the literature that aim to solve this problem.
Some of these functions can be optimized in polynomial time, however, most of these formulations of extracting dense subgraphs are NP-hard and also hard to approximate.

Recently, there was a growing interest in the extraction of subgraphs whose vertices are highly connected to each other \cite{tsourakakis2013denser,balalau2015finding,tsourakakis2015k}.
However, existing methods do not always find subgraphs with high density $\delta$.
Instead, they prefer subgraphs with many vertices even if their density $\delta$ is not very high.
In many cases, we are interested in discovering sets of vertices where there is an edge between almost all their pairs.
In this paper, we introduce a new formulation for extracting dense subgraphs.
We define a new family of functions for measuring the density of a subgraph and we provide exact and approximate algorithms that allow the extraction of large subgraphs with high density $\delta$ by maximizing these functions. Our contributions are fourfold:
\begin{enumerate}[label=(\roman*),leftmargin=.4cm]
  \item \textbf{New formulation:} We introduce the \textit{$k$-clique-graph densest subgraph} ($k$-clique-GDS) problem, a new formulation for finding large subgraphs with high density $\delta$.
  Given a value for $k$, we create a graph whose vertices correspond to $k$-cliques of the original graph and we draw edges between two $k$-cliques if they share a common $(k-1)$-clique.
  We then extract a dense subgraph from the new graph and we project the result back to the original graph.
  We focus on the special case obtained for $k = 3$ which we call the \textit{triangle-graph densest subgraph} (TGDS) problem.
  We define a new density function which is suited to the needs of our problem.
  \item \textbf{Approximation algorithm:} We propose an efficient greedy approximation algorithm for the TGDS problem which removes one vertex at each iteration.
  The algorithm achieves nearly-optimal results on real-world networks.
  \item \textbf{Experimental evaluation:} We evaluate our approximation algorithm on several real-world networks.
  We compare the obtained subgraphs with those outputted by state-of-the-art algorithms and we observe that the proposed algorithm extracts subgraphs of high quality.
  We also present an application of our problem to the task of keyword extraction from textual documents.
\end{enumerate}

\section{Related Work}\label{sec:related_work}
\noindent
In this Section, we review the related work published in the areas of \textit{Clique Finding}, \textit{Dense Subgraph Discovery} and \textit{Triangle Listing}.\\
\textbf{Clique Finding.}
A clique is a graph whose vertices are all connected to each other.
Hence, all cliques have density $\delta = 1$.
A \textit{maximum} clique of a graph is a clique, such that there is no clique with more vertices.
Finding the maximum clique of a graph is an NP-complete problem \cite{karp1972reducibility}.
The maximum clique problem is also hard to approximate.
More specifically, H{\aa}stad showed in \cite{hastad1996clique} that for any $\epsilon > 0$, there is no polynomial algorithm that approximates the maximum clique within a factor better than $\mathcal{O}(n^{1-\epsilon})$, unless NP has expected polynomial time algorithms.
Feige presented in \cite{feige2004approximating} a polynomial-time algorithm that approximates the maximum clique within a ratio of $\mathcal{O}(\nicefrac{n (\log \log n)^2}{(\log n)^3})$.
A \textit{maximal} clique is a clique that is not included in a larger clique.
The Bron–Kerbosch algorithm is a recursive backtracking procedure \cite{bron1973algorithm} that lists all maximal cliques in a graph in $\mathcal{O}(3^{n/3})$ time.\\
\textbf{Dense Subgraph Discovery.}
The problem of finding a dense subgraph given an input graph has been widely studied in the literature \cite{lee2010survey}.
As mentioned above, such a problem aims at finding a subset of vertices $S \subseteq V$ of an input graph $G$ that maximizes some notion of density.
Among all the functions for evaluating dense subgraphs, degree density has gained increased popularity.
The degree density of a set of vertices $S$ is defined as $d(S) = 2|E(S)|/|S|$.
The problem of finding the set of vertices that maximizes the degree density is known as the \textit{densest subgraph} (DS) problem.
The set of vertices $S \subseteq V$ that maximizes the degree density can be identified in polynomial time by solving a series of minimum-cut problems \cite{goldberg1984finding}.
Charikar showed in \cite{charikar2000greedy} that the DS problem can also be formulated as a linear programming (LP) problem.
In the same paper, the author proved that the greedy algorithm proposed by Asahiro et al. \cite{asahiro2000greedily} provides a $\frac{1}{2}$-approximation to the DS problem in linear time.

Some variations of the DS problem include the \textit{densest $k$-subgraph} (DkS), the \textit{densest at-least-$k$-subgraph} (DalkS) and the \textit{densest at-most-$k$-subgraph} (DamkS) problems.
These variations put restrictions on the size of the extracted subgraph.
The DkS identifies the subgraph with exactly $k$ vertices that maximizes the degree density and is known to be NP-complete \cite{asahiro2002complexity}.
Feige et al. provided in \cite{feige2001dense} an approximation algorithm with approximation ratio $\mathcal{O}(n^{\delta})$, where $\delta < 1/3$.
The DalkS and DamkS problems were introduced by Andersen and Chellapilla \cite{andersen2009finding}.
The first problem asks for the subgraph of highest degree density among all subgraphs with at least $k$ vertices and is known to be NP-hard \cite{khuller2009finding}, while the second problem asks for the subgraph of highest density among all subgraphs with at most $k$ vertices and is known to be NP-complete \cite{andersen2009finding}.

Tsourakakis introduced in \cite{tsourakakis2015k} the \textit{$k$-clique densest subgraph} ($k$-clique-DS) problem which generalizes the DS problem.
The $k$-clique-DS problem maximizes the average number of $k$-cliques induced by a set $S \subseteq V$ over all possible vertex subsets.
For $k = 3$, we obtain the so-called \textit{triangle densest subgraph} (TDS) problem which maximizes the triangle density defined as $d_{tr}(S) = t(S)/|S|$ where $t(S)$ is the number of triangles in $S$.
The author provides two polynomial-time algorithms that identify the exact set of vertices that maximizes the triangle density and a $\frac{1}{3}$-approximation algorithm which runs asymptotically faster than any of the exact algorithms.

There are several other recent algorithms that extract dense subgraphs by maximizing other notions of density \cite{sozio2010community,tsourakakis2013denser,wang2010triangulation}.
It is worthwhile mentioning Tsourakakis et al.'s work \cite{tsourakakis2013denser}.
The authors defined the \textit{optimal quasi-clique} (OQC) problem which finds the subset of vertices $S \subseteq V$ that maximizes the function $f_{\alpha}(S) = |E(S)| - \alpha \binom{|S|}{2}$ where $\alpha \in (0,1)$ is a constant.
The OQC problem is not polynomial-time solvable and the authors provided a greedy approximation algorithm that runs in linear time and a local-search heuristic.\\
\textbf{Triangle Listing.}
Given a graph $G$, the \textit{triangle listing} problem reports all the triangles in $G$.
The triangle listing problem has been extensively studied and a large number of algorithms has been proposed \cite{itai1978finding,chiba1985arboricity,schank2005finding}.
A listing algorithm requires at least one operation per triangle.
In the worst case, there are $n^3$ triangles in terms of the number of vertices and $m^{3/2}$ in terms of the number of edges.
Hence, in the worst case, it takes $m^{3/2}$ time just to report the triangles.
The above algorithms require $\mathcal{O}(m^{3/2})$ time to list the triangles and they are thus optimal in the worst case.
Recently, Bj{\"o}rklund et al. proposed output sensitive algorithms which run asymptotically faster when the number of triangles in the graph is small \cite{bjorklund2014listing}.

\section{Problem Definition}\label{sec:problem_definition}
\noindent
In this Section, we will introduce the \textit{$k$-clique-graph densest subgraph} ($k$-clique-GDS) problem, a novel formulation for finding dense subgraphs.
In the following, we will restrict ourselves to the case where $k=3$, that is to triangles.
At the end of the Section, we will describe how the proposed approach can be generalized to the case of $k$-cliques, $k > 3$.

The cornerstone of the proposed method is the transformation of the input graph $G = (V,E)$ into another graph $G' = (V',E')$.
The transformed graph $G'$ is a more abstract representation of $G$.
Specifically, it encodes information regarding the triangles of the input graph $G$ and the relationships between them.

As a preprocessing step before applying the transformation, we assign labels to the edges of the input graph $G$.
Given a set of labels $L$, $\ell : E \rightarrow L$ is a function that assigns labels to the edges of the graph.
Each edge is assigned a unique label.
Hence, the cardinality of the set $L$ is equal to that of set $E$, $|L|=|E|$.
We next proceed with the transformation of $G$ into $G'$.
The first step of the transformation procedure is to run a triangle listing algorithm.
There are several available triangle listing algorithms as described in Section~\ref{sec:related_work}.
Let $T(S)$ be the set of triangles extracted from $G$.
For each triangle $t \in T(G)$, we create a vertex in the new graph $G'$.
Therefore, each vertex represents one of the triangles extracted from $G$.
Pairs of triangles that share a common edge in $G$ are considered neighbors and are connected with an edge in $G'$.
In other words, each edge in $G'$ corresponds to a pair of triangles sharing the same edge.
The edges of $G'$ are also assigned labels.
Each edge in $G'$ is given the label of the edge that is shared between the two corresponding triangles in $G$.
For example, given a pair of triangles $t_1 = (v_1,v_2,v_3)$ and $t_2 = (v_1,v_2,v_4)$ where $t_1,t_2 \in T(G)$, these triangles have a common edge $e = (v_1,v_2)$ and the edge $e'$ that links them in $G'$ gets the same label as $e$, that is $\ell(e') = \ell(e)$.
A triangle has three edges, hence, although it can have any number of adjacent edges in $G'$, its labels come from a limited alphabet consisting of only three items (the labels of the three edges of the triangle in $G$).
We call the transformed graph $G'$ the \textit{triangle-graph} of $G$.
Algorithm~\ref{alg:transformation} 
\begin{algorithm}[t]
\caption{Construct triangle-graph}
\label{alg:transformation}
\begin{algorithmic}
\Require{graph $G = (V,E)$}
\Ensure{graph $G' = (V',E')$}
\State{1: Assign a unique label to each edge of the input graph $G$.}
\State{2: Extract all triangles in $G$ by running a triangle listing algorithm. Let $T(S)$ be the set of the extracted triangles.}
\State{3: Create a new empty graph $G'$.}
\State{4: For each triangle $t \in T(G)$ create a vertex in the $G'$.}
\State{5: Connect two vertices in $G'$ with an edge if the corresponding triangles in $G$ share a common edge.}
\State{6: Assign to the new edge the label of the edge that is shared between the two triangles.}
\State{7: Return $G'$.}
\end{algorithmic}
\end{algorithm}
describes the steps required to create $G'$ from $G$ and Figure~\ref{fig:transformation}
illustrates how a graph containing $4$ triangles is transformed into its triangle-graph.

After creating the triangle-graph $G'$, we can find a subset of vertices $S' \subset V'$ that correponds to a dense subgraph.
As mentioned earlier, each vertex $v \in S'$ represents a triangle $t$ of the input graph $G$.
Each triangle $t$ is a set of three vertices.
Intuitively, the union of the vertices of all the triangles that belong to the set $S'$ will form a dense subgraph of $G$.
To extract the set of vertices $S'$, we can define a density measure and optimize it.
A simple measure we can employ is the well-known degree density defined as $d(S') = 2|E(S')|/|S'|$.
However, the above function will not necessarily lead to subgraphs with high density.
Consider the two graphs shown in Figure~\ref{fig:comparison}.
\begin{figure}[t]
  \centering
  \includegraphics[width=.7\linewidth]{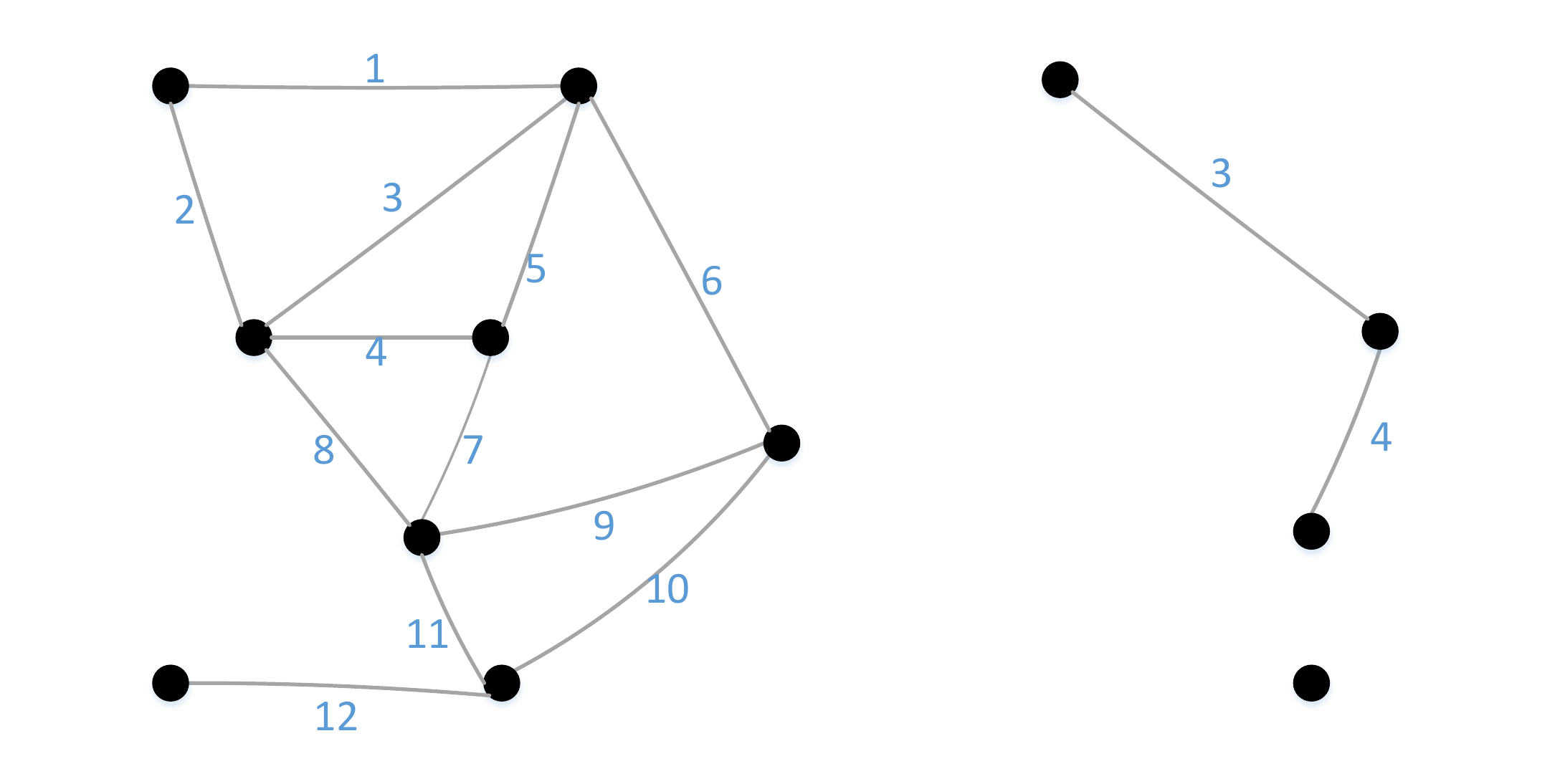}
  \caption{Example of an input graph (left)  and the triangle-graph (right) created from it. There are $4$ triangles in the input graph defined by the following triads of edges: ($1,2,3$), ($3,4,5$), ($4,7,8$) and ($9,10,11$). The first two as well as the second and third triangles have a common edge (edge $3$ and edge $4$ respectively). Hence, these pairs of triangles are connected with an edge in the triangle-graph. The fourth triangle does not share any edges with the other triangles, therefore, it has no adjacent edges in the triangle-graph.}
  \label{fig:transformation}
\end{figure}
\begin{figure}[t]
  \centering
  \includegraphics[width=.8\linewidth]{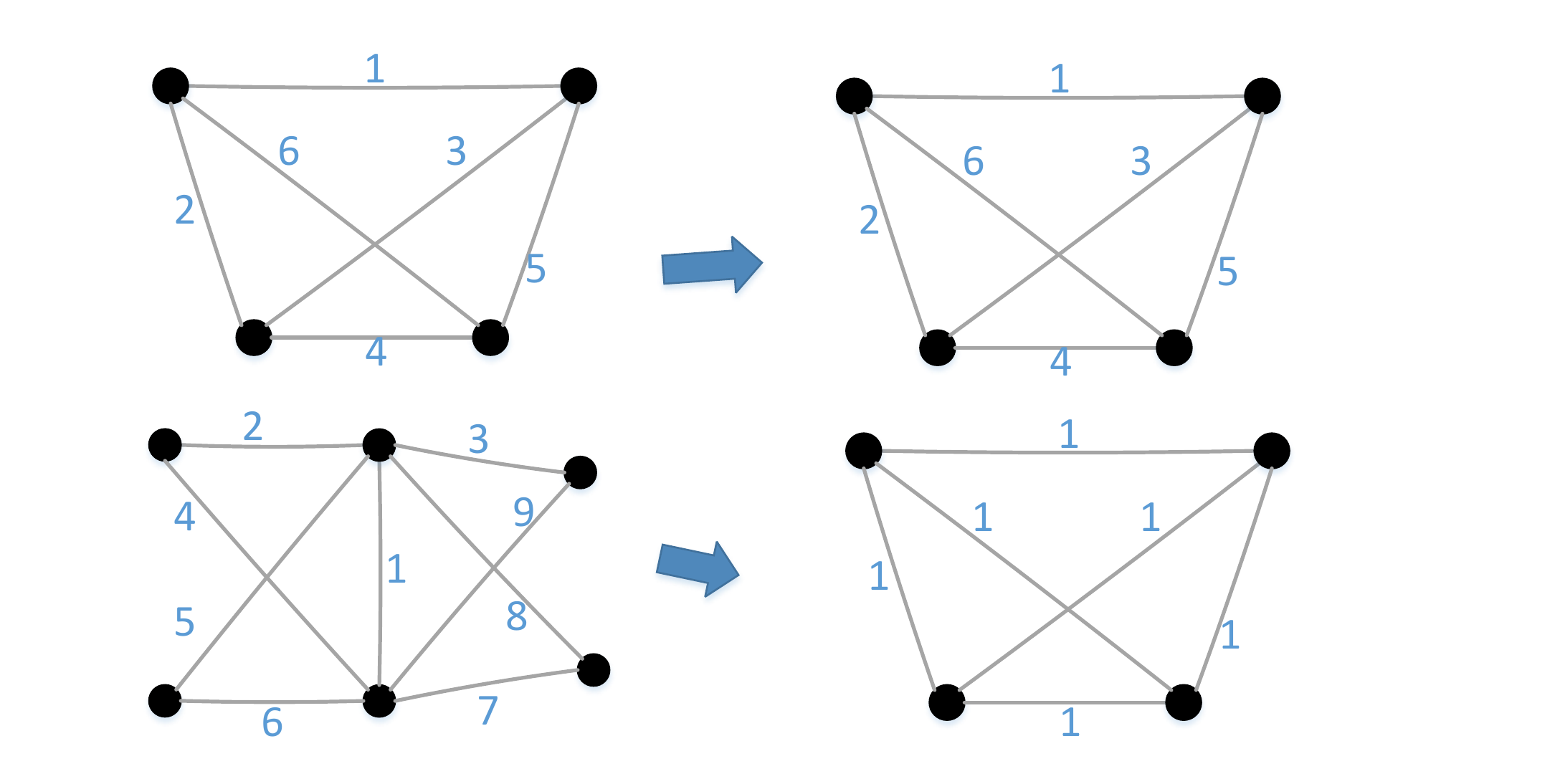}
  \caption{Two input graphs (left) and their triangle-graphs (right). The two triangle-graphs are structurally equivalent although the input graphs are not.}
  \label{fig:comparison}
\end{figure}
As can be seen from the Figure, the triangle-graphs emerging from the two input graphs are structurally equivalent, and hence, they have the same degree density.
As a result, if the two graphs are components of a larger graph and there are no other subgraphs with higher value, they are equally likely solutions to the DS problem.
However, it is obvious that the upper graph suits better our purpose, and we would like our algorithm to prefer this compared to the lower graph.

To account for this problem, we define a new density measure which we call the \textit{triangle-graph density}.
\begin{definition}[Triangle-Graph Density]
Given an undirected, unweighted graph $G = (V,E)$, first construct its triangle graph $G' = (V',E')$.
For any $S' \subseteq V'$, we define its triangle-graph density as $f(S') = \frac{d(S')}{|S'|}$ where $d(S') = \sum_{v \in S'} \min_{l \in L(v)} \big( deg_{S'}(v,l) \big)$, $L(v)$ the set of labels of the edges adjacent to $v$ (three labels at most), and $deg_{S'}(v,l)$ the number of edges that are adjacent to $v$ in the subgraph induced by $S'$ and are assigned the label $l$.
\end{definition}
The triangle-graph density will allow the discovery of subgraphs with high values of density $\delta$.
This is due to the fact that for each triangle $t$ in $G$, the function takes into account the number of neighbors from all three edges of $t$.
If a triangle $t$ corresponding to the vertex $v$ in $G'$ shares one of its edges with many other triangles, but the other two edges with no triangles, then $\min_{l \in L(v)} \big( deg_{S'}(v,l) \big) = 0$.
Therefore, even if $t$ has many neighbors, it contributes nothing to the triangle-graph density.
Triangle-graph density seeks for subgraphs whose vertices belong to edges which all consist of large sets of vertices.
Cliques are natural candidates for maximizing the function since all their edges are shared between several triangles.

We next introduce the \textit{triangle-graph densest subgraph} problem, the optimization problem we address in this paper.
\begin{problem}[TGDS problem]
Given an undirected, unweighted graph $G = (V,E)$, create its triangle-graph $G' = (V',E')$, and find a subset of vertices $S^* \subseteq V'$ such that $f(S^*) = \argmax_{S' \subseteq V'} f(S')$.
\end{problem}
After optimizing the triangle-graph density, we end up with a set of vertices $S' \subseteq V'$ and from these we obtain the set of vertices $S \subseteq V$ that corresponds to the resulting subgraph.
The set $S$ consists of all the vertices that form the triangles in $S'$.
It is clear that the TGDS problem can result in subgraphs with high values of density $\delta$.

What needs to be investigated next is what are the properties of the extracted subgraphs and how they differ from the ones extracted from existing methods.
The proposed \textit{triangle-graph densest subgraph} (TGDS) problem seems to be very related to the \textit{triangle densest subgraph} (TDS) problem introduced by Tsourakakis in \cite{tsourakakis2015k}.
However, as we will show next, the two problems can result in different solutions, and the subgraphs returned by TGDS are closer to being near-cliques compared to the ones returned by TDS.
Consider the graph $G$ and its triangle-graph $G'$ both shown in Figure~\ref{fig:example}.
\begin{figure}[t]
  \centering
  \includegraphics[trim = 10mm 5mm 1mm 5mm, clip, width=.8\linewidth]{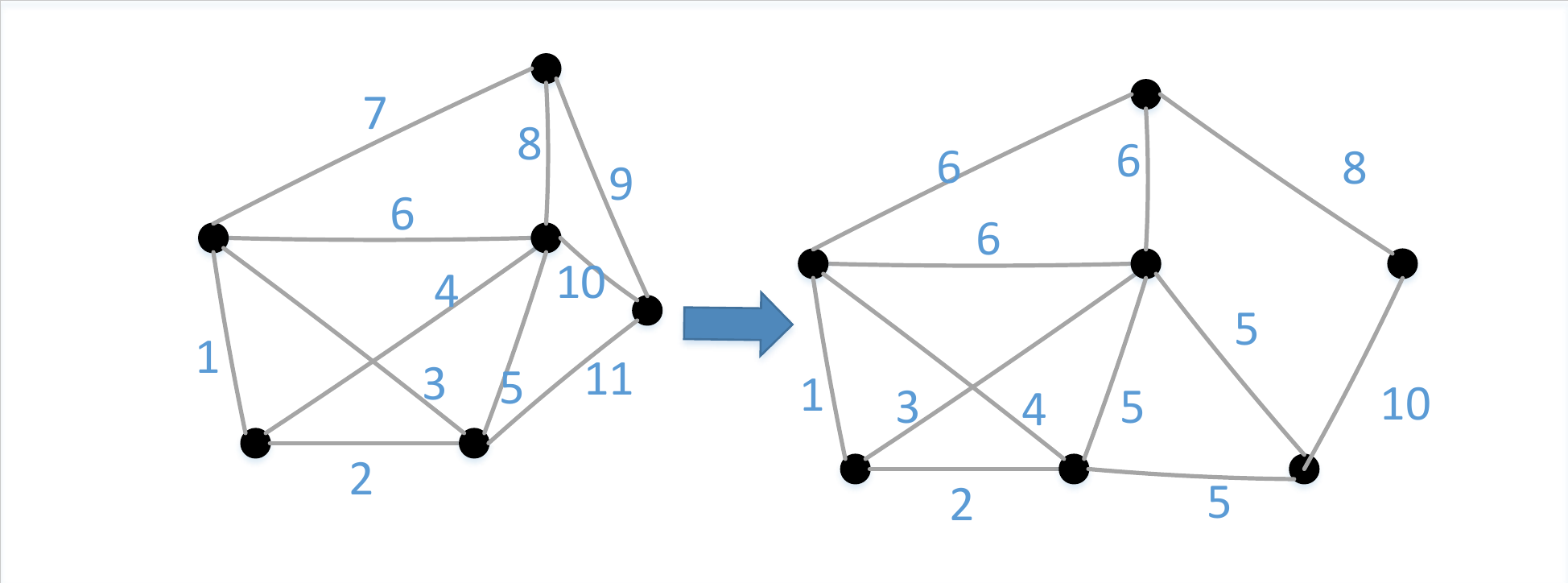}
  \caption{Example of an input graph (left)  and the triangle-graph (right) created from it. There are $7$ triangles in the input graph defined by the following triads of edges: ($1,2,3$), ($1,4,6$), ($2,4,5$), ($3,5,6$), ($6,7,8$), ($8,9,10$) and ($5,10,11$).}
  \label{fig:example}
\end{figure}
The optimal solution of TDS is the whole graph.
Conversely, the optimal solution of TGDS is the subgraph induced by the vertices that form the $4$-clique.
Hence, the optimal solution of the proposed problem is a clique, while the optimal solution of TDS is a larger graph with lower density $\delta$.
The above example demonstrates that the optimal solutions of TGDS correspond to subgraphs that exhibit a stronger near-clique structure compared to TDS.

The process of creating the $k$-clique graph for $k > 3$ is similar to the one described above for $k = 3$.
Specifically, to construct the $k$-clique graph $G'=(V',E')$, we first extract all the $k$-cliques from $G$.
Then for each $k$-clique in $G$, we create a vertex $v$ in $G'$.
Two vertices $v_1,v_2 \in V'$ are connected with an edge if the corresponding cliques share a common $(k-1)$-clique in $G$.
For example, for $k = 4$, if two $4$-cliques in $G$ share a common triangle, an edge is drawn between them in $G'$.
Each $(k-1)$-clique in $G$ is assigned a unique label and the edges of the $k$-clique graph are assigned the labels of the $(k-1)$-cliques that are shared between their two endpoints.
Then, the \textit{$k$-clique-graph density} and the \textit{$k$-clique-graph densest subgraph} ($k$-clique-GDS) problem are defined in a similar way as in the case of triangles.
The algorithms presented in the next Section for maximizing triangle-graph density can be generalized to maximizing the $k$-clique-graph density.
However, extracting $k$-cliques for $k > 3$ is a computationally demanding task, and hence, we restrict ourselves to the case where $k = 3$.

\section{Proposed Algorithm}\label{sec:proposed_methods}
\noindent
In this Section, we present a greedy algorithm for solving the TGDS problem.
The algorithm is inspired by previously-introduced algorithms in the field of dense subgraph discovery.
In what follows, we assume that we have extracted all triangles from the input graph and we have created the triangle-graph.
Note that, for simplicity of notation, from now on, we denote by $G = (V,E)$ the triangle-graph and not the input graph.
We also denote by $q_S(v)$ the minimum degree of vertex $v$ with respect to the three labels of its adjecent edges in the subgraph induced by $S$, that is $q_S(v) = \min_{l \in L(v)} \big( deg_{S}(v,l) \big)$.

We next provide an efficient algorithm for extracting a set of vertices $S \subseteq V$ with high value of triangle-graph density $f(S)$.
The proposed algorithm is an adaptation of the greedy algorithm of Asahiro et al. \cite{asahiro2000greedily}.
The algorithm is illustrated as Algorithm~\ref{alg:greedy}.
\begin{algorithm}[t]
\caption{Greedy algorithm}
\label{alg:greedy}
\begin{algorithmic}
\Require{graph $G = (V,E)$}
\Ensure{Subset of vertices $S \subseteq V$}
\State{$S_{|V|} \gets V$}
\For{$i \gets |V|$ to $1$}
  \State{Let $v$ be the vertex whose minimum value of the three degrees is the smallest in the subgraph induced by $S_i$}
  \State{$S_{i-1} \gets S_i \setminus \{v\}$}
\EndFor
\State{$S \gets \argmax_{i=1,\ldots,|V|} f(S_i)$}
\end{algorithmic}
\end{algorithm}
The algorithm iteratively removes the vertex $v$ whose value $d(v)$ is the smallest among all vertices.
Subsequently, it computes the triangle-graph density of the subgraph induced by the remaining vertices.
The output is the subgraph over all the produced subgraphs that maximizes triangle-graph density.
The algorithm is linear to the number of vertices and the number of edges of the triangle-graph, hence its complexity is $\mathcal{O}(t+y)$ where $t$ is the number of triangles in the input graph and $y$ is the number of edges of the triangle-graph.

\begin{theorem}
Let $S$ be the set of vertices returned after the execution of Algorithm~\ref{alg:greedy} and let $S^*$ be the set of vertices of the optimal subgraph.
Consider the iteration of the greedy algorithm just before the first vertex $u$ that belongs in the optimal set $S^*$ is removed, and let $S_I$ denote the vertex set currently kept in that iteration.
Let also $q_{S_I}(u)$ be the minimum degree of vertex $u$ in $S_I$ with respect to the three labels of its adjacent edges.
Then, it holds that
\begin{equation*}
	f(S) \geq \frac{|S^*|}{|S_I|}f_G^* + \bigg( 1-\frac{|S^*|}{|S_I|} \bigg) q_{S_I}(u)
\end{equation*}
\end{theorem}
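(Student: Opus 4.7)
The plan is to follow the classical Charikar/Asahiro-style analysis for greedy densest subgraph, adapted to the minimum-per-label quantity $q_S(v) = \min_{l \in L(v)} deg_S(v,l)$. Two structural facts about iteration $I$ drive everything: (a) since $u$ is the \emph{first} optimal vertex removed, every earlier removal was of a non-optimal vertex, so $S^* \subseteq S_I$; and (b) the greedy rule selected $u$, which means $q_{S_I}(u) \leq q_{S_I}(v)$ for every $v \in S_I$.

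First I would establish a monotonicity lemma for $q$: whenever $A \subseteq B \subseteq V$ and $v \in A$, we have $q_A(v) \leq q_B(v)$, because for each fixed label $l \in L(v)$ the per-label degree $deg_A(v,l)$ can only shrink when passing to a subset, so the minimum over $l$ shrinks too. Applying this with $A = S^*$ and $B = S_I$ yields $q_{S_I}(v) \geq q_{S^*}(v)$ for every $v \in S^*$, hence
\begin{equation*}
\sum_{v \in S^*} q_{S_I}(v) \;\geq\; \sum_{v \in S^*} q_{S^*}(v) \;=\; d(S^*) \;=\; |S^*|\, f_G^*.
\end{equation*}

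Next I would handle the remaining vertices $S_I \setminus S^*$ using the greedy choice of $u$: each such $v$ satisfies $q_{S_I}(v) \geq q_{S_I}(u)$, so $\sum_{v \in S_I \setminus S^*} q_{S_I}(v) \geq (|S_I| - |S^*|)\, q_{S_I}(u)$. Adding this to the previous bound yields $d(S_I) \geq |S^*|\, f_G^* + (|S_I| - |S^*|)\, q_{S_I}(u)$. Dividing by $|S_I|$ produces exactly the convex-combination lower bound for $f(S_I)$ appearing in the theorem. Because Algorithm~\ref{alg:greedy} returns $S = \argmax_i f(S_i)$ over all iterates (including the one at iteration $I$), we conclude $f(S) \geq f(S_I)$, and the theorem follows.

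The only subtle point, and the step I would verify most carefully, is the monotonicity of $q$. The minimum over labels makes $q_S(v)$ a nonlinear function of the neighborhood, so one might worry about non-monotone behavior; but the minimum is taken over the fixed label alphabet $L(v)$ determined by $v$ alone (the three edges of the underlying triangle in $G$), and each per-label degree is manifestly monotone in $S$, so the minimum inherits monotonicity. Everything else is a direct split of the sum $d(S_I) = \sum_{v \in S_I} q_{S_I}(v)$ along the partition $S_I = S^* \cup (S_I \setminus S^*)$, so I expect no further technical obstacle.
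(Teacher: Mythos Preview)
Your proposal is correct and follows essentially the same route as the paper: use $S^*\subseteq S_I$, monotonicity $q_{S^*}(v)\le q_{S_I}(v)$, the greedy minimality $q_{S_I}(u)\le q_{S_I}(v)$, split $d(S_I)$ along $S_I=S^*\cup(S_I\setminus S^*)$, and finish with $f(S)\ge f(S_I)$. The only cosmetic difference is that the paper writes the $S^*$-part as $\sum_{v\in S^*} q_{S^*}(v)+\sum_{v\in S^*}\big(q_{S_I}(v)-q_{S^*}(v)\big)$ and drops the nonnegative second sum, whereas you bound $\sum_{v\in S^*} q_{S_I}(v)\ge |S^*|f_G^*$ directly; your remark that $L(v)$ is fixed by $v$ is exactly what justifies the monotonicity step in both versions.
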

\begin{proof}
Given a subset of vertices $S \subseteq V$ and a vertex $v$, let $q_S(v)$ be the minimum degree of vertex $v$ with respect to its three labels.
Let also $S^*$ be the vertices of the optimal subgraph.
The optimal value of the function is obtained for the set of vertices $S^*$ and is equal to $f(S^*) = \nicefrac{d(S^*)}{|S^*|}$.

Consider the iteration of the greedy algorithm just before the first vertex $u$ that belongs in the optimal set $S^*$ is removed.
Let $S_I$ denote the set of vertices still present before the removal of $u$.
The value of the function for the set of vertices $S_I$ is then $f(S_I) = \nicefrac{d(S_I)}{|S_I|}$.

Since $S^* \subseteq S_I \subseteq V$, it holds that $q_V(v) \geq q_{S_I}(v) \geq q_{S^*}(v)$, $\forall v$.
In each iteration, the algorithm removes the vertex with the minimum degree with respect to the three labels of its adjacent edges.
Since $u$ is the first vertex to be removed by the algorithm, it is also easy to see that $q_{S_I}(v) \geq q_{S_I}(u) \geq q_{S^*}(u)$.
Therefore,
\begin{equation*}
\begin{split}
  & f(S_I) = \frac{d(S_I)}{|S_I|} \\
  &= \frac{\sum\limits_{v \in S^*} q_{S^*}(v) + \sum\limits_{v \in S^*} \big( q_{S_I}(v) - q_{S^*}(v) \big) + \sum\limits_{v \in S_I \backslash S^*} q_{S_I}(v)}{|S_I|} \\
  & \geq \frac{\sum\limits_{v \in S^*} q_{S^*}(v) + \sum\limits_{v \in S_I \backslash S^*} q_{S_I}(v)}{|S_I|} \\
  & = \frac{|S^*|f(S^*) + \sum\limits_{v \in S_I \backslash S^*} q_{S_I}(v)}{|S_I|} \geq \frac{|S^*|f(S^*) + (|S_I| - |S^*|) q_{S_I}(u)}{|S_I|} \\
  & = \frac{|S^*|}{|S_I|}f(S^*) + \bigg( 1-\frac{|S^*|}{|S_I|} \bigg) q_{S_I}(u)
\end{split}
\end{equation*}
The algorithm returns a set of nodes $S$ which is the best over all iterations, hence we obtain
\begin{equation*}
  f(S) \geq f(S_I) \geq \frac{|S^*|}{|S_I|}f(S^*) + \bigg( 1-\frac{|S^*|}{|S_I|} \bigg) q_{S_I}(u)
\end{equation*}
\end{proof}

From the above result, we can see that the bound provided by the approximation algorithm highly depends on the relationship between $|S_I|$, the size of the vertex set just before the first vertex of $S^*$ is removed, and $|S^*|$, the size of the optimal set.
It also depends on the relationship between the optimal value of the triangle-graph density $f(S^*)$ and the minimum degree $q_{S_I}(u)$ of the first vertex of the optimal set $S^*$ to be removed from $S_I$ with respect to its three labels.
The difference between $|S_I|$ and $|S^*|$, and between $f(S^*)$ and $q_{S_I}(u)$ is not very large in practice, and the algorithm leads to subgraphs with quality almost equal to that of the optimal subgraphs.

\section{Experiments and Evaluation}\label{sec:experiments}
\noindent
In this Section, we present the evaluation of the proposed approach for extracting dense subgraphs.
We first give details about the datasets that we used for our experiments.
We then present the employed experimental settings.
And we last report on the results obtained by our approach and some other methods.

\subsection{Experimental Setup}
\noindent
For the evaluation of the proposed algorithms, we employed several publicly available graphs.
The algorithms are applicable to simple unweighted, undirected graphs.
Hence, we made all graphs simple by ignoring the edge direction in the case of directed graphs and by removing self-loops and egde weights, if any.
Table~\ref{tab:graphs} shows statistics of these graphs.
\begin{table}[t]
\centering
\caption{Graphs used for evaluating the algorithms.}
\label{tab:graphs}
\scriptsize
\def\arraystretch{1.2}
\begin{tabular}{|l|r|r|} \hline
\multicolumn{1}{|c|}{\textbf{Graph}} & \multicolumn{1}{c|}{$\mathbf{|V|}$} & \multicolumn{1}{c|}{$\mathbf{|E|}$} \\ \hline
Karate & 34 & 78 \\ 
Dolphins & 62 & 159 \\ 
Lesmis & 77 & 254 \\
Adjnoun & 112 & 425 \\
Football & 115 & 613 \\ 
Polbooks & 105 & 441 \\
Celegansneural & 297 & 2,148 \\
Polblogs & 1,224 & 16,715 \\
Power & 4,941 & 6,594 \\
Wiki-Vote & 7,115 & 100,762 \\
ca-CondMat & 23,133 & 93,439 \\
p2p-Gnutella31 & 62,586 & 147,892 \\
Slashdot0902 & 82,168 & 504,230 \\
email-EuAll & 265,009 & 364,481 \\
web-NotreDame & 325,729 & 1,497,134 \\
Amazon & 334,863 & 925,872 \\
Youtube & 1,134,890 & 2,987,624 \\
roadNet-CA & 1,965,206 & 2,766,607 \\
\hline
\end{tabular}
\end{table}
The first ten datasets were obtained from UCIrvine Network Data Repository\footnote{\url{https://networkdata.ics.uci.edu/index.php}}, while the remaining datasets were obtained from Stanford SNAP Repository\footnote{\url{http://snap.stanford.edu/data/index.html}}.
We compared the proposed algorithms with algorithms that solve the \textit{densest subgraph} (DS), the \textit{triangle densest subgraph} (TDS) and the \textit{optimal quasi-clique} (OQC) problems.
For the first two (DS and TDS problems), there are algorithms that solve these problems exactly in polynomial time.
Hence, for small-sized datasets, we present the results obtained from both the exact and greedy approximation algorithms for each problem.
For larger datasets, we report only on the results achieved by the greedy approximation algorithms.
With regards to the objective function of the OQC problem, we set the value of parameter $\alpha$ equal to $\nicefrac{1}{3}$ as suggested in \cite{tsourakakis2013denser}.
All algorithms were implemented in Python\footnote{Code is available at \url{https://github.com/giannisnik/k-clique-graphs-dense-subgraphs}} and all experiments were conducted on a single machine with a $3.4$GHz Intel Core i$7$ processor and $32$GB of RAM.
To assess the quality of the extracted subgraphs, we employed the following measures: the density of the extracted subgraph $\delta(S) = |E(S)|/\binom{|S|}{2}$, the density with respect to the number of triangles $\tau(S) = t(S)/\binom{|S|}{3}$, that is the number of triangles in $S$ over the total possible triangles, and the size of the subgraph $|S|$.
The $\delta$ and $\tau$ measures take values between $0$ and $1$.
The larger their value, the closer the subgraph to being a clique.
Therefore, we are interested in finding large subgraphs (large value of $|S|$) with $\delta$ and $\tau$ values close to $1$.
 
\subsection{Results and Discussion}
\noindent
Table~\ref{tab:small_results}
\begin{table}[t]
\centering
\caption{Comparison of the extracted subgraphs by Goldberg's exact algorithm for the DS problem (Exact DS), Charikar's $\frac{1}{2}$ approximation algorithm for the DS problem (Greedy DS), Tsourakakis's algorithm for the TDS problem (Exact TDS), Tsourakakis's $\frac{1}{3}$ approximation algorithm for the TDS problem (Greedy TDS), Tsourakakis et al.'s greedy approximation algorithm for the OQC problem (Greedy OQC), and our greedy approximation algorithm for the TGDS problem (Greedy TGDS).}
\label{tab:small_results}
\scriptsize
\def\arraystretch{1.2}
\resizebox{\textwidth}{!} {
\begin{tabular}{|l|ccc|ccc|ccc|ccc|ccc|ccc|}
\hline
\multirow{2}{*}{\textbf{Dataset}} & \multicolumn{3}{c|}{\textbf{Exact DS}} & \multicolumn{3}{c|}{\textbf{Greedy DS}} & \multicolumn{3}{c|}{\textbf{Exact TDS}} & \multicolumn{3}{c|}{\textbf{Greedy TDS}} & \multicolumn{3}{c|}{\textbf{Greedy OQC}} & \multicolumn{3}{c|}{\textbf{Greedy TGDS}}\\
& $|S|$ & $\delta$ & $\tau$ & $|S|$ & $\delta$ & $\tau$ & $|S|$ & $\delta$ & $\tau$ & $|S|$ & $\delta$ & $\tau$ & $|S|$ & $\delta$ & $\tau$ & $|S|$ & $\delta$ & $\tau$ \\ \hline
Karate & 16 & 0.35 & 0.05 & 16 & 0.35 & 0.05 & 6 & 0.93 & 0.80 & 6 & 0.93 & 0.80 & 10 & 0.55 & 0.18 & 6 & 0.93 & 0.80 \\
Dolphins & 20 & 0.32 & 0.04 & 36 & 0.17 & 0.01 & 7 & 0.80 & 0.54 & 6 & 0.93 & 0.80 & 13 & 0.47 & 0.11 & 6 & 0.93 & 0.80 \\
Lesmis & 23 & 0.49 & 0.18 & 23 & 0.49 & 0.18 & 13 & 0.88 & 0.71 & 13 & 0.88 & 0.71 & 22 & 0.50 & 0.19 & 12 & 0.93 & 0.83 \\
Adjnoun & 48 & 0.20 & 0.01 & 44 & 0.22 & 0.01 & 41 & 0.23 & 0.01 & 41 & 0.23 & 0.01 & 16 & 0.48 & 0.11 & 7 & 0.85 & 0.62 \\
Football & 115 & 0.09 & 0.00 & 115 & 0.09 & 0.00 & 18 & 0.48 & 0.20 & 18 & 0.48 & 0.20 & 10 & 0.88 & 0.66 & 18 & 0.48 & 0.20 \\
Polbooks & 24 & 0.41 & 0.09 & 48 & 0.19 & 0.02 & 20 & 0.49 & 0.15 & 36 & 0.26 & 0.04 & 14 & 0.67 & 0.30 & 13 & 0.69 & 0.34 \\
\hline
\end{tabular}
}
\end{table}
summarizes the results obtained on small-sized graphs.
We observe that on the small-sized graphs, the proposed algorithm (Greedy TGDS) returns in general subgraphs that are closer to being a clique compared to the competing algorithms.
As we can see from the Table, the densities $\delta$ and $\tau$ of the subgraphs extracted by our algorithm are relatively high.
Our initial intention was to design an algorithm for finding a set of vertices with many edges between them.
The obtained results verify our intuition that the proposed approach is capable of finding near-cliques.
Furthermore, we show in Table~\ref{tab:bound}
\begin{table}[t]
\centering
\caption{Triangle-graph densities of the subgraphs extracted by an exact agorithm and the proposed greedy approximation algorithm.}
\label{tab:bound}
\def\arraystretch{1.1}
\scriptsize
\begin{tabular}{|l|c|c|}
\hline
\textbf{Dataset} & \textbf{Exact TGDS} & \textbf{Greedy TGDS} \\ \hline
Karate & 2.25 & 2.25 \\
Dolphins & 2.25 & 2.25 \\
Lesmis & 7.60 & 7.60 \\
Adjnoun & 2.39 & 2.36 \\
Football & 6.0 & 6.0 \\
Polbooks & 4.02 & 3.89 \\
\hline
\end{tabular}
\end{table}
the triangle-graph density of the subgraphs extracted by a brute-force exact algorithm and the proposed greedy approximation algorithm.
We notice that on four out of the six graphs, the two densities are equal to each other, while on the other two, they are very close to each other.
The obtained results indicate that the greedy algorithm achieves approximation ratios close to $1$ on real-world networks.
Hence, the approximation algorithm is nearly-optimal in practice.

Next, we present results obtained on larger graphs.
Specifically, Table~\ref{tab:large_results} 
\begin{table}[t]
\centering
\caption{Comparison of the extracted subgraphs by Charikar's $\frac{1}{2}$ approximation algorithm for the DS problem (Greedy DS), Tsourakakis's $\frac{1}{3}$ approximation algorithm for the TDS problem (Greedy TDS), Tsourakakis et al.'s greedy approximation algorithm for the OQC problem (Greedy OQC), and our greedy approximation algorithm for the TGDS problem (Greedy TGDS).}
\label{tab:large_results}
\def\arraystretch{1.2}
\scriptsize
\begin{tabular}{|l|ccc|ccc|ccc|ccc|}
\hline
\multirow{2}{*}{\bf{Dataset}} & \multicolumn{3}{c|}{\textbf{Greedy DS}} & \multicolumn{3}{c|}{\textbf{Greedy TDS}} & \multicolumn{3}{c|}{\textbf{Greedy OQC}} & \multicolumn{3}{c|}{\textbf{Greedy TGDS}}\\
& $|S|$ & $\delta$ & $\tau$ & $|S|$ & $\delta$ & $\tau$ & $|S|$ & $\delta$ & $\tau$ & $|S|$ & $\delta$ & $\tau$ \\
\hline
Celegansneural & 127 & 0.13 & 0.005 & 30 & 0.47 & 0.13 & 22 & 0.61 & 0.25 & 24 & 0.55 & 0.21 \\
Polblogs & 278 & 0.20 & 0.020 & 102 & 0.54 & 0.195 & 100 & 0.55 & 0.202 & 74 & 0.67 & 0.343 \\
Power & 31 & 0.20 & 0.021 & 12 & 0.54 & 0.195 & 12 & 0.54 & 0.195 & 12 & 0.54 & 0.195 \\
Wiki-Vote & 828 & 0.11 & 0.004 & 464 & 0.19 & 0.014 & 133 & 0.47 & 0.131 & 152 & 0.42 & 0.104 \\
ca-CondMat & 26 & 1.0 & 1.0 & 26 & 1.0 & 1.0 & 26 & 1.0 & 1.0 & 26 & 1.0 & 1.0 \\
p2p-Gnutella31 & 1,549 & 0.005 & 0.0 & 10 & 0.40 & 0.11 & 14 & 0.48 & 0.0 & 22 & 0.15 & 0.016 \\
soc-Slashdot0902 & 219 & 0.39 & 0.097 & 171 & 0.50 & 0.165 & 155 & 0.54 & 0.200 & 145 & 0.56 & 0.225 \\
email-EuAll & 505 & 0.13 & 0.005 & 200 & 0.29 & 0.041 & 97 & 0.51 & 0.164 & 91 & 0.52 & 0.179 \\
web-NotreDame & 1,367 & 0.11 & 0.012 & 457 & 0.34 & 0.114 & 305 & 0.51 & 0.255 & 155 & 1.0 & 1.0 \\
Amazon & 9 & 0.91 & 0.761 & 16 & 0.45 & 0.178 & 9 & 0.91 & 0.761 & 170 & 0.03 & 0.001 \\
Youtube & 1,860 & 0.049 & 0.0006 & 729 & 0.11 & 0.005 & 125 & 0.46 & 0.115 & 442 & 0.17 & 0.012 \\
roadNet-CA & 19,899 & 0.0001 & 0.0 & 168 & 0.017 & 0.0002 & 5 & 0.80 & 0.40 & 168 & 0.017 & 0.0002 \\
\hline
\end{tabular}
\end{table}
compares the four approaches on $12$ graphs.
In general, the proposed algorithm still manages to extract subgraphs with high values of $\delta$ and $\tau$.
However, on two graphs (Amazon, roadNet-CA), it fails to discover high-quality subgraphs in terms of density.
Overall, the Greedy DS algorithm returns the largest subgraphs, followed by the Greedy TDS algorithm, while the Greedy OQC algorithm and the proposed algorithm return smaller subgraphs with higher values of density.
We notice that the subgraphs extracted by the proposed greedy approximation algorithm resemble most those extracted by the Greedy TDS algorithm.
On the ca-CondMat dataset, all the algorithms extract the same subgraph.
There is a large clique hidden in this graph and all the algorithms manage to find it.

\section{Application}\label{sec:application}
\noindent
In this Section, we apply the proposed algorithm to a central problem in Natural Language Processing: extracting keywords from a textual document.
Keyword extraction finds applications in several fields from information retrieval to text classification and summarization.
Given a document $d$, we can represent it as a statistical \textit{graph-of-words}, following earlier approaches in keyword extraction \cite{mihalcea_textrank:_2004,rousseau2015main,tixiergraph} and in summarization \cite{meladianos2015degeneracy}.
The construction of the graph is preceded by a preprocessing phase where standard text processing tasks are performed.
The processed document is then transformed into an unweighted, undirected graph $G$ whose vertices represent unique terms and whose edges represent co-occurrences between the connected terms within a fixed-size window.
We then employ the proposed algorithm to extract a dense subgraph from $G$.
The vertices of the subgraph act as representative keywords of the document.

To demonstrate the ability of the proposed approach to identify meaningful keywords, we extracted the text of this paper and we transformed it into a graph $G$ using a window of size $3$ (each word is connected with an edge with each one of its two preceding and two following words, if any).
We then extracted a dense subgraph from $G$ using the proposed greedy approximation algorithm.
The output subgraph consists of the following $31$ vertices:
\begin{center}
\texttt{subgraphs, labels, maximizes, vertices, k, first, cliques, triangle, subgraph, algorithm, triangles, value, optimal, density, edges, large, g, number, vertex, given, function, clique, graph, v, e, set, problem, input, edge, extract, hence}
\end{center}
As we can observe, the extracted keywords capture the main concepts of the paper.

\section{Conclusion}\label{sec:conclusion}
\noindent
In this paper, we propose a novel approach for extracting dense subgraphs.
Given a graph, our algorithm first transforms it to a $k$-clique-graph.
We then introduce a simple density measure to extract high-quality subgraphs.
We propose a greedy approximation algorithm for maximizing the density function. 
We evaluate our proposed approach for the case where $k=3$ on real graphs and we compare it with other popular measures.
We also evaluate our proposed method on the task of keyword extraction from textual documents.
Overall, our algorithm shows good performance in finding large near-cliques, and can serve as a useful addition to the list of dense subgraph discovery algorithms.

\bibliographystyle{splncs03}
\bibliography{biblio.bib}

\end{document}